\documentclass{article}
\usepackage{amsmath,amsfonts,amssymb,amsthm,etoolbox,stmaryrd}

\usepackage{algorithm}
\usepackage[noend]{algpseudocode}
\algrenewcommand\algorithmicthen{\relax}
\algrenewcommand\algorithmicdo{\relax}

\usepackage[colorlinks=true,citecolor=blue,pdfpagemode=UseNone,pdfstartview=FitH]{hyperref}

\emergencystretch=5mm
\tolerance=400
\allowdisplaybreaks[4]

\renewcommand{\d}{\,\mathrm{d}}
\newcommand{\dd}{\mathrm{d}}

\newcommand{\E}{\mathbb{E}}

\newcommand{\R}{\mathbb{R}}

\newcommand{\EEE}{\mathcal{E}}
\newcommand{\FFF}{\mathcal{F}}
\newcommand{\PPP}{\mathcal{P}}
\newcommand{\SSS}{\mathcal{S}}
\newcommand{\UUU}{\mathcal{U}}
\newcommand{\WWW}{\mathcal{W}}

\theoremstyle{plain}
\newtheorem{theorem}{Theorem}[section]

\newtheorem{lemma}[theorem]{Lemma}

\theoremstyle{definition}

\newtheorem{example}[theorem]{Example}

\theoremstyle{remark}

\title{An optimality property of the Bayes--Kelly algorithm}
\author{Vladimir Vovk}

\begin{document}
\maketitle

\begin{abstract}
  \smallskip
  This note states a simple property of optimality
  of the Bayes--Kelly algorithm for conformal testing
  and poses a related open problem.

 Its version at \url{http://alrw.net} (Working Paper 39)
 is updated most often.
\end{abstract}

\section{Introduction}

This note is motivated by Gr\"unwald et al.'s
RSS discussion paper \cite{Grunwald/etal:arXiv1906}.
The main result of that paper (Theorem~1) is elegant and satisfying,
but in this note I will concentrate on one of its limitations.
The two modest contributions of my comment
is to state a very simple property of optimality of the Bayes--Kelly algorithm
\cite[Sect.~9.2.1]{Vovk/etal:2022book} in conformal testing
and to pose an open problem.
I will try to make the conformal testing part formally self-contained,
but for further intuition and details, see \cite{Vovk/etal:2022book}.

A \emph{test martingale} $S$ in a probability space $(\Omega,\FFF,P)$
equipped with a filtration $(\FFF_n)_{n=0}^{\infty}$
is a nonnegative martingale starting from 1.
In other words, $S=(S_0,S_1,\dots)$,
each $S_n$ is required to be $\FFF_n$-measurable,
$\E_P(S_n\mid\FFF_{n-1})=S_{n-1}$ for all $n\ge1$,
$S_n$ are required to be nonnegative, and $S_0=1$.
We can interpret $\log S_n$ as the amount of evidence
(measured in bits)
found at time $n$ against $P$ as null hypothesis.
One-step counterparts of test martingales are ``e-variables'',
to be introduced shortly.

\section{Gr\"unwald et al.'s result with discussion}

The following is Gr\"unwald et al.'s \cite{Grunwald/etal:arXiv1906}
main result (in its basic form, namely Theorem~1
in \cite[Sect.~2]{Grunwald/etal:arXiv1906}).
The notation used in it will be explained after the statement.

\begin{theorem}[Gr\"unwald et al.]\label{thm:Grunwald}
  Suppose $Q$ is a probability distribution with full support
  and with density $q$,
  and assume
  \begin{equation}\label{eq:condition}
    \inf_{w\in\WWW}
    D(Q\|P_{w})
    <
    \infty.
  \end{equation}
  Then there exists a (potentially sub-) distribution $P$ with density $p$ such that
  \begin{equation}\label{eq:E-star}
    E^*
    :=
    q/p
  \end{equation}
  is an e-variable.
  Moreover, $E^*$ satisfies, essentially uniquely,
  \begin{equation}\label{eq:optimality-1}
    \sup_{E\in\EEE}
    \E_Q(\log E)
    =
    \E_Q(\log E^*)
    =
    \inf_{W\in\WWW}
    D(Q\|P_{W})
    =
    D(Q\|P).
  \end{equation}
  If the $\inf$ is attained,
  so that
  $
    D(Q\|P_{W^*})
    =
    D(Q\|P)
  $,
  then $P = P_{W^*}$.
\end{theorem}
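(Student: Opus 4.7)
My plan is to establish the theorem in three stages following the standard reverse information projection (RIPr) approach: (i) produce a minimizer $P$ for the functional $W\mapsto D(Q\|P_W)$; (ii) deduce the e-variable property of $E^\ast=q/p$ from the first-order optimality of $P$; and (iii) verify the optimality inequality via a Jensen-type argument.

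First I would produce the (sub-)distribution $P$. Starting from a minimizing sequence $P_{W_n}$ with $D(Q\|P_{W_n})\downarrow \inf_{W\in\WWW} D(Q\|P_W)$, I would extract a subsequential weak limit of the densities $p_{W_n}$. Since nothing prevents the family from leaking mass in the limit, the candidate $P$ is only guaranteed to be a sub-probability measure with density $p$. Lower semicontinuity of $D(Q\|\cdot)$ in its second argument then gives $D(Q\|P)\le \inf_{W\in\WWW} D(Q\|P_W)$; the reverse inequality follows from the definition once one has shown $P$ lies in the closure of the convex hull of $\{P_w:w\in\WWW\}$.

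Next, given $P$, I would verify that $E^\ast=q/p$ is an e-variable. For each $w\in\WWW$ consider the convex perturbation $P_t:=(1-t)P+tP_w$, $t\in[0,1]$. By minimality of $P$, the right derivative of $t\mapsto D(Q\|P_t)$ at $t=0$ is nonnegative. A routine differentiation under the integral, justified by $D(Q\|P_w)<\infty$, shows this derivative equals $1-\E_{P_w}[q/p]$, giving $\E_{P_w}[E^\ast]\le 1$ for every $w$. For the optimality inequality, the equality $\E_Q(\log E^\ast)=D(Q\|P)$ is tautological. For any e-variable $E\in\EEE$, the function $R:=Ep$ is a sub-density, because $\int R=\E_P[E]\le 1$ (the e-variable inequality extends to convex combinations by linearity and passes to $P$ in the limit). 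Jensen's inequality under $Q$ then yields
$$
\E_Q\bigl[\log(E/E^\ast)\bigr]
=
\int q\log(R/q)
\le
\log\int R
\le
0,
$$
with equality essentially only when $R=q$ $Q$-almost surely, which supplies essential uniqueness. The final claim that $P=P_{W^*}$ whenever the infimum is attained is then immediate: $P_{W^*}$ lies in the convex class and attains the same minimum value, so the strict convexity embedded in the Jensen step forces it to coincide with $P$.

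The step I expect to be the main obstacle is the first one: proving that the infimum is attained by an honest (sub-)distribution rather than being approached only through a degenerating sequence. The topology must be strong enough to pass to the limit inside $D(Q\|\cdot)$ yet weak enough to supply compactness on the convex hull of $\{P_w:w\in\WWW\}$, and in general the limit $P$ need not lie in that convex hull. This is precisely the content of Li and Barron's RIPr theorem, and the reason the statement has to allow $P$ to be a sub-probability measure.
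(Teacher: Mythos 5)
The paper you are looking at does not prove this statement: it is quoted verbatim as Gr\"unwald et al.'s Theorem~1 and attributed to \cite{Grunwald/etal:arXiv1906}, so there is no in-paper proof to compare against. Measured against the actual proof in that reference, your sketch follows essentially the same route --- the reverse information projection: obtain $P$ as a limit along a minimizing sequence in the convex hull of $\{P_w\}$ (this is Li and Barron's existence theorem, which you correctly identify as the real content and the reason $P$ may be a sub-probability), derive $\E_{P_\theta}(E^*)\le 1$ from first-order optimality of the perturbation $(1-t)P+tP_w$, and get the optimality and essential uniqueness from the Jensen step $\E_Q[\log(Ep/q)]\le\log\int Ep\le 0$.

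One step in your write-up deserves a flag because it is stated as if it were routine when it is not: the claim that $\int Ep\le 1$ because the e-variable inequality ``passes to $P$ in the limit.'' Weak convergence of the densities $p_{W_n}$ to $p$ does not by itself give $\E_{P_{W_n}}(E)\to\E_P(E)$ for an arbitrary nonnegative measurable $E$ (it is neither bounded nor continuous), so $\E_{P_{W_n}}(E)\le 1$ does not transfer to the limit for free. This is exactly the delicate point that the RIPr literature handles by establishing a stronger mode of convergence of $p_{W_n}$ to $p$; without invoking that, your Jensen step has a gap. The rest --- the first-order-condition computation giving $1-\E_{P_w}(q/p)\ge 0$, the equality case of Jensen for essential uniqueness, and the strict-convexity argument forcing $P=P_{W^*}$ when the infimum is attained --- is sound as a sketch, with the usual caveat that differentiating $t\mapsto D(Q\|P_t)$ under the integral at $t=0$ needs the finiteness hypothesis \eqref{eq:condition} to be justified.
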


Theorem~\ref{thm:Grunwald} is about a statistical model
$(P_{\theta}\mid\theta\in\Theta)$
such that each $P_{\theta}$ is absolutely continuous
w.r.\ to some underlying measure $\mu$.
This statistical model plays the role of our null hypothesis,
while $Q$ plays the role of the alternative hypothesis.
The notation $D(Q\|P)$ refers to the Kullback--Leibler divergence
between $Q$ and $P$.
The parameter space $\Theta$ is equipped with a $\sigma$-algebra,
$\WWW$ stands for the set of all probability measures on $\Theta$,
and $\EEE$ stands for the set of all \emph{e-variables},
i.e., nonnegative random variables $E$ satisfying
$\E_{P_{\theta}}(E)\le1$ for all $\theta\in\Theta$.
For each $W\in\WWW$, $P_W:=\int P_{\theta} W(\dd\theta)$
is the mixture of $P_{\theta}$ w.r.\ to $W$.
``Essentially uniquely'' is defined in a natural way.

The first equality in~\eqref{eq:optimality-1}
can be interpreted as saying
that $E^*$ is the optimal e-variable under $Q$.

\subsection{Limitation}
\label{subsec:limitation}

An important limitation of Theorem~\ref{thm:Grunwald}, as I see it,
is the condition \eqref{eq:condition}.
This condition is stated in exactly this way
only in Gr\"unwald et al.'s Theorem~1 in its basic form
of \cite[Sect.~2]{Grunwald/etal:arXiv1906},
but analogous conditions are present in all generalizations
given in \cite{Grunwald/etal:arXiv1906}.

To see how restrictive \eqref{eq:condition} is,
consider, following \cite{Ramdas/etal:2022} and \cite[Chap.~9]{Vovk/etal:2022book},
the case of coin tossing.
Formally, the null hypothesis is $(P_{\theta}\mid\theta\in[0,1])$,
where $P_{\theta}=B_{\theta}^{\infty}$
and $B_{\theta}$ is the probability measure on $\{0,1\}$
satisfying $B_{\theta}(\{1\})=\theta$.
The condition \eqref{eq:condition} then means
that $Q$ should be absolutely continuous
w.r.\ to an exchangeable probability measure on $\{0,1\}^{\infty}$.
This condition is violated for interesting $Q$,
such as a Markov $Q$ outside the family $(P_{\theta}\mid\theta\in[0,1])$.
The main alternative hypothesis used in \cite{Ramdas/etal:2022}
is a Jeffreys-type mixture of the Markov measures,
and then \eqref{eq:condition} is also violated.
It is difficult to think of natural cases
where \eqref{eq:condition} would be satisfied.

\subsection{An objection}

A possible objection to the argument of Sect.~\ref{subsec:limitation}
is that infinite sequences are irrelevant in real life,
where we only observe finite sequences.
If instead of the sample space $\{0,1\}^{\infty}$
we consider the sample space $\{0,1\}^n$
for a finite (even if very large) $n$,
the condition \eqref{eq:condition} will be satisfied.
Isn't it all that matters?

The difficulty with this solution
is that the e-variables $E^*_n$
obtained in this way by using Theorem~\ref{thm:Grunwald}
do not have to cohere with each other for different $n$;
for example, they do not have to form a test martingale,
or a test supermartingale, or an e-process.
Optional continuation and stopping become big problems for $E^*_n$,
and $\log E^*_n$ are no longer jointly valid
as the amount of evidence found against the null hypothesis
at time $n$.

To summarise, yes, we can truncate the sequential process
of observing the bits $z_1,z_2,\dots\in\{0,1\}$,
but it matters where we truncate it.
On the other hand,
if we build an e-variable $E^*:\{0,1\}^{\infty}\to[0,\infty]$
for the infinite time horizon,
L\'evy's ``upward'' theorem \cite[Theorem 14.2]{Williams:1991}
will give us a test martingale.
Therefore, the kind of infinity that we need as our time horizon
is the potential infinity, not the actual one.
If we do not know the number of observations in advance
and just would like to have an online measure of evidence
found against the null hypothesis,
Theorem~\ref{thm:Grunwald} does not give us anything useful.
Replacing $E^*:\{0,1\}^{\infty}\to[0,\infty]$
by $E^*:\{0,1\}^N\to[0,\infty]$ for a very large $N$
is also awkward
since the resulting test martingale will typically depend on $N$
(even over the first few steps $n$).

\subsection{Simple solution and its limitation}

It is interesting that the binary case with a given alternative hypothesis
(such as mixed Markov \cite{Ramdas/etal:2022}
or change-point \cite[Sect.~9.2.3]{Vovk/etal:2022book})
admits a simple solution:
just replace~\eqref{eq:E-star} by
\begin{equation*}
  E^*_n
  :=
  \frac
  {Q([Z_1,\dots,Z_n])}
  {\sup_{\theta\in[0,1]}P_{\theta}([Z_1,\dots,Z_n])}
\end{equation*}
for each step $n=0,1,\dots$,
where $[z_1,\dots,z_n]$ stands
for the set of all infinite sequences in $\{0,1\}^{\infty}$
that begin with $z_1,\dots,z_n$,
and $Z_i$ are the random bits whose realizations are the observed bits $z_i$.
This corresponds to replacing the mean $p$ of the densities $p_{\theta}$ of $P_{\theta}$
($p$ is the mean if we assume that the $\inf$ is attained
according to the last statement of Theorem~\ref{thm:Grunwald})
by the supremum of $p_{\theta}$.
Then $E^*_n$ agree with each other
in the sense of forming an e-process
\cite[Theorem 6]{Ramdas/etal:2022}.

However, the idea of replacing $p$ in \eqref{eq:E-star} by $\sup$
does not work outside narrow parametric cases
\cite[Remark~9.8]{Vovk/etal:2022book}.
The following example is still very basic
(in machine learning the task is usually to predict
the labels of complicated objects, such as movies).

\begin{example}\label{ex:failure}
  Fix a finite time horizon $N\gg1$
  and assume that the observations $z_1,\dots,z_N$ are real numbers,
  $z_n\in\R$,
  so that $Z_1,\dots,Z_N$ are random variables.
  The null hypothesis is that of \emph{randomness}:
  $Z_1,\dots,Z_N$ are IID.
  The alternative hypothesis is a continuous probability measure $Q$ on $\R^N$
  (such as a changepoint hypothesis, as in \cite[Remark~9.8]{Vovk/etal:2022book}).
  Then the likelihood ratio of $Q$
  to the maximum likelihood over the null hypothesis is 0;
  indeed, $Q([z_1,\dots,z_N])=0$ and the maximum likelihood is positive,
  namely at least $N^{-N}$
  (it is exactly $N^{-N}$ if the $N$ observations are all different).
  This is worse than useless,
  as the identical 1 is a trivial test martingale.
\end{example}

\section{Conformal testing}

Our book \cite[Part~III]{Vovk/etal:2022book} presents
a general framework, which we call conformal testing,
for testing nonparametric null hypotheses
(first of all the hypothesis of randomness)
in very general situations typical of machine learning.
The procedure does not depend on assumptions such as \eqref{eq:condition}.
The efficiency of conformal testing is demonstrated in empirical studies
reported in \cite[Chap.~8]{Vovk/etal:2022book},
but theoretical results about efficiency have been established
only in toy binary situations \cite[Sect.~9.2]{Vovk/etal:2022book}.
The validity is, however, guaranteed,
in that conformal testing leads to stochastic processes
that are test martingales whenever the observations are IID.

Let me first introduce some terminology and notation.
The \emph{observation space} $\mathbf{Z}$ is a measurable space.
(In the previous section we had $\mathbf{Z}=\{0,1\}$
and then, in Example~\ref{ex:failure}, $\mathbf{Z}=\R$.)
There is an underlying probability space $(\Omega,\FFF,P)$
in the background, but we rarely need it explicitly.
We observe random elements $Z_1,Z_2,\dots$ of $\mathbf{Z}$
(formally, each $Z_n$ is a measurable mapping from $\Omega$ to $\mathbf{Z}$)
with their realized values denoted by $z_1,z_2,\dots\in\mathbf{Z}$.

The null hypothesis considered in this section
(which we call the hypothesis of randomness)
is that the observations $Z_1,Z_2,\dots$ are IID.
(Generalization to other null hypotheses is briefly discussed
in Sect.~\ref{sec:conclusion}.)
The underlying probability space is assumed to be rich enough;
in particular, a probability measure on $\Omega$ making $Z_1,Z_2,\dots$ IID
is assumed to exist
(conformal prediction also needs a sequence of independent and uniformly distributed
$\tau_1,\tau_2,\dots\in[0,1]$ modelling a random number generator).

We let $\mathbf{Z}^{(*)}$ stand for the set of all \emph{bags}
(or \emph{multisets}) $\lbag z_1,\dots,z_n\rbag$
consisting of elements of $\mathbf{Z}$ (with $n=0$ allowed);
the difference between the bag $\lbag z_1,\dots,z_n\rbag$ and the set $\{z_1,\dots,z_n\}$
is that the bag (while still unordered)
can contain several copies of the same element.

A conformal test martingale is determined by two components:
\begin{itemize}
\item
  A \emph{conformity measure} $A$,
  which is a measurable function $A:\mathbf{Z}^{(*)}\times\mathbf{Z}\to\R$.
\item
  A \emph{betting martingale} $B$,
  which is a test martingale in the probability space
  $([0,1]^{\infty},\UUU,U)$ with filtration $(\UUU_n)_{n=1}^{\infty}$,
  where $\UUU$ is the Borel $\sigma$-algebra on $[0,1]^{\infty}$,
  $U$ is the uniform probability measure on $([0,1]^{\infty},\UUU)$,
  and $\UUU_n$ is the $\sigma$-algebra generated by the first $n$ elements
  of the sequences in $[0,1]^{\infty}$.
\end{itemize}
Given these two components,
we define the corresponding conformal test martingale as follows.

The \emph{$n$th conformal p-value} $p_n$ is defined by
\begin{equation}\label{eq:p}
  p_n
  :=
  \frac
  {\left|\{i:\alpha_i<\alpha_n\}\right|+\tau_n\left|\{i:\alpha_i=\alpha_n\}\right|}
  {n},
\end{equation}
where $i=1,\dots,n$,
the \emph{conformity scores} $\alpha_i$ are computed from $z_i$
using the conformity measure $A$ by
\begin{equation}\label{eq:alpha}
  \alpha_i
  :=
  A(\lbag z_1,\dots,z_n\rbag,z_i),
  \quad
  i=1,\dots,n,
\end{equation}
and $\tau_1,\tau_2,\dots$ are independent random variables
that are distributed uniformly on $[0,1]$
(modelling a random number generator).
The \emph{conformal test martingale} (\emph{CTM}) $S$
determined by $A$ and $B$
is the result of applying the betting martingale $B$
to the p-values \eqref{eq:p}:
\[
  S_n:=B_n(p_1,p_2,\dots),
  \quad
  n=0,1,\dots.
\]
The associated $\sigma$-algebras $\FFF_n$ are those generated by $p_1,\dots,p_n$
(in particular, $\FFF_0=\{\emptyset,\Omega\}$).
An equivalent definition of a CTM
is that it is a test martingale in the filtration $(\FFF_n)$
(this follows from, e.g., \cite[Lemma~A3.2]{Williams:1991}).

The property of validity for the conformal p-values
is that they are independent and uniformly distributed on $[0,1]$
under the null hypothesis.
This implies that a CTM is a test martingale under the null hypothesis.

Theoretical results about efficiency are established
in \cite[Chap.~9]{Vovk/etal:2022book}
only in the binary case
and for a specific conformity measure
(the identical one, $A(\lbag z_1,\dots,z_n\rbag,z_i):=z_i$).
In this note we will take a slightly wider approach:
will fix a conformity measure and will then find the optimal betting martingale
for a given alternative hypothesis.
Therefore, we consider a very limited kind of optimality:
\begin{itemize}
\item
  First, we restrict ourselves to the class of conformal test martingales.
\item
  And even within this class, we consider a fixed conformity measure.
\end{itemize}

\subsection{Full alternative hypotheses}

Let $Q$ be a probability measure on $\mathbf{Z}^{\infty}$;
this is our alternative hypothesis about the distribution of $Z_1,Z_2,\dots$.
It is \emph{full} in the sense
of fully determining the distribution of the observations
$Z_1,Z_2,\dots$;
in Sect.~\ref{subsec:shrunk}
we will consider an alternative probability measure
on a poorer $\sigma$-algebra.
Our null hypothesis, as before, is that $Z_1,Z_2,\dots$ are IID.

The following paragraph is a description of the optimal
(in the sense to be described later) under $Q$
CTM with a given conformity measure $A$.
Computationally efficient (or at least more explicit)
versions of this CTM will be referred to as the \emph{Bayes--Kelly algorithm}.

Consider the following Bayesian model
(a statistical model plus a prior distribution on the parameter space).
The parameter space is $\mathbf{Z}^{\infty}$,
and it is equipped with $Q$ as prior distribution.
The element $P_{\zeta}$ of the statistical model
indexed by $\zeta=(z_1,z_2\dots)\in\mathbf{Z}^{\infty}$
is the distribution of the corresponding p-values
defined by \eqref{eq:p} and \eqref{eq:alpha} for a given $\zeta$
(we regard the random number generator $\tau_1,\tau_2,\dots$ used in \eqref{eq:p}
as fixed).
The marginal distribution of the p-values $p_1,p_2,\dots$ is the mixture
\[
  P
  :=
  \int
  P_{\zeta}
  Q(\d\zeta).
\]
The relative increment $S_n/S_{n-1}$ of the betting martingale
on step $n$ is then defined
as the conditional density $f_n$ of $p_n$ given $p_1,\dots,p_{n-1}$
(we will choose a natural version of the conditional density
given by Lemma~\ref{lem:conditional} below)
evaluated at the realized $p_n$.
Knowing $S_n/S_{n-1}$ defines the betting martingale,
since we know that its starting value is $S_0=1$.
The \emph{Bayes--Kelly CTM} is determined by $A$ and this betting martingale.

\begin{lemma}\label{lem:conditional}
  A conditional density of $p_n$ given $p_1,\dots,p_{n-1}$ exists.
  There is a version $f_n$ of the conditional density
  that is constant over each of the intervals
  \begin{equation}\label{eq:intervals}
    \begin{aligned}
      &[i/n,(i+1)/n), \quad i=0,\dots,n-2,\\
      &[(n-1)/n,1].
    \end{aligned}
  \end{equation}
\end{lemma}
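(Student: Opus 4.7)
My strategy is to compute the conditional density of $p_n$ given the full observation sequence $\zeta=(z_1,z_2,\dots)$, observe that it is already piecewise constant on the intervals \eqref{eq:intervals}, and only then average it over the posterior of $\zeta$ given $p_1,\dots,p_{n-1}$. Fixing $\zeta$, set
\[
k_n := \lvert\{i\le n:\alpha_i<\alpha_n\}\rvert, \qquad m_n := \lvert\{i\le n:\alpha_i=\alpha_n\}\rvert \ge 1,
\]
both determined by $z_1,\dots,z_n$. Under $P_{\zeta}$ the only randomness is $\tau_n$, and \eqref{eq:p} gives $p_n=(k_n+\tau_n m_n)/n$, so $p_n$ is uniform on $[k_n/n,(k_n+m_n)/n]$. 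Because $k_n/n$ and $(k_n+m_n)/n$ lie on the grid $\{0,1/n,\dots,1\}$, there is a version $g_n^{\zeta}$ of its density equal to $n/m_n$ on the union of those intervals in \eqref{eq:intervals} indexed by $i=k_n,\dots,k_n+m_n-1$ and to $0$ elsewhere; in particular, $g_n^{\zeta}$ is constant on each interval in \eqref{eq:intervals}.

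Next I will exploit conditional independence. Given $\zeta$, the random variables $p_1,\dots,p_{n-1}$ depend only on $\tau_1,\dots,\tau_{n-1}$ whereas $p_n$ depends only on $\tau_n$; since the $\tau_i$ are independent uniforms and independent of $\zeta$, $p_n$ is conditionally independent of $p_1,\dots,p_{n-1}$ given $\zeta$. Define
\[
f_n(p) := \E_P\bigl[g_n^{\zeta}(p)\mid p_1,\dots,p_{n-1}\bigr].
\]
Since $g_n^{\zeta}$ is constant on each interval of \eqref{eq:intervals} for every $\zeta$, so is the mixture $f_n$; concretely, $f_n$ is a linear combination of the indicators of those $n$ intervals with $\sigma(p_1,\dots,p_{n-1})$-measurable coefficients. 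This realisation automatically supplies a jointly measurable version of $(p,p_1,\dots,p_{n-1})\mapsto f_n(p)$.

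Finally, to verify that $f_n$ is indeed a conditional density of $p_n$ given $p_1,\dots,p_{n-1}$, I will combine the conditional independence with Fubini: for every Borel $A\subseteq[0,1]$,
\[
P(p_n\in A\mid p_1,\dots,p_{n-1}) = \E_P\Bigl[\int_A g_n^{\zeta}(p)\,dp\,\Big|\,p_1,\dots,p_{n-1}\Bigr] = \int_A f_n(p)\,dp.
\]
The main obstacle I anticipate is arranging such a jointly measurable version of $f_n$, since naively $f_n(p)$ is defined only as an a.s.\ equivalence class for each fixed $p$; the piecewise-constant structure just described bypasses this issue, reducing everything to a finite collection of conditional expectations of bounded random variables times indicators of fixed intervals.
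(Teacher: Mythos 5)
Your proposal is correct and follows essentially the same route as the paper's own proof: compute the density of $p_n$ conditional on the full sequence $\zeta$ (uniform on a grid-aligned interval, hence piecewise constant on the intervals \eqref{eq:intervals}), then average over the posterior of $\zeta$ given $p_1,\dots,p_{n-1}$, with Fubini justifying that the result is a genuine conditional density. You merely spell out a few details the paper leaves implicit, such as the conditional-independence step and the explicit value $n/m_n$ of the conditional density.
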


\begin{proof}
  For a fixed $\zeta\in\mathbf{Z}^{\infty}$,
  $P_{\zeta}$ generates independent p-values $p_1,p_2,\dots$,
  and each $p_n$ (defined by \eqref{eq:p})
  is distributed uniformly on an interval $[n_*/n,n^*/n]$
  for some $n_*\in\{0,\dots,n-1\}$ and $n^*\in\{i+1,\dots,n\}$.
  Namely,
  \begin{equation}\label{eq:counts}
    \begin{aligned}
      n_* &= \left|\{i\in\{1,\dots,n\}:\alpha_i<\alpha_n\}\right|\\
      n^* &= \left|\{i\in\{1,\dots,n\}:\alpha_i\le\alpha_n\}\right|
    \end{aligned}
  \end{equation}
  in the notation of \eqref{eq:p}.
  Let $f_n^{\zeta}$ be the density for $p_n$ under $Q$
  conditional on knowing $\zeta$
  and w.r.\ to the uniform probability measure on $[0,1]$.
  This is a piecewise constant function,
  which we assume taking constant values
  on the intervals \eqref{eq:intervals}.
  On each of these intervals,
  $f_n^{\zeta}$ takes values in $[0,n]$.
  Then $f_n$, as the integral of $f_n^{\zeta}$
  w.r.\ to the posterior distribution on $\zeta$
  (with prior $Q$ and after observing $p_1,\dots,p_{n-1}$),
  exists and integrates to 1,
  the latter following from Fubini's theorem.
\end{proof}

In \cite[Chap.~9]{Vovk/etal:2022book},
we spell out the details of the Bayes--Kelly algorithm
in two special (binary) cases:
\begin{itemize}
\item
  changepoint alternatives,
\item
  Markov alternatives, following \cite{Ramdas/etal:2022}.
\end{itemize}
In both cases, the procedure is computationally efficient
(while the computational efficiency of Algorithm~\ref{alg:BK}
described below
is unclear).
We also prove its general optimality properties
(i.e., without the \emph{a priori} restriction to conformal testing).

\begin{algorithm}[bt]
  \caption{Bayes--Kelly algorithm (continuous version)}
  \label{alg:BK}
  \begin{algorithmic}[1]
    \State $S_0:=1$
    \State $\Sigma:=\mathbf{Z}$\label{l:Sigma_0}
    \For{$n=1,2\dots$:}
      \State Set $f_n$ to the density of the pushforward of $Q_{\Sigma}$
          under \eqref{eq:p}--\eqref{eq:alpha}\label{l:pushforward}
      \State Read $z_n\in\mathbf{Z}$
      \State Compute $\alpha_1,\dots,\alpha_n$ as per~\eqref{eq:alpha}
      \State Read $\tau_n\in[0,1]$
      \State Compute $p_n$ as per~\eqref{eq:p}
      \State $S_n:=S_{n-1}f_n(p_n)$
      \For{$(z_1,\dots,z_n)\in\Sigma$:}\label{l:second-for}
        \State Compute $\alpha_1,\dots,\alpha_n$ as per~\eqref{eq:alpha}
        \State Compute $n_*$ and $n^*$ as per~\eqref{eq:counts}\label{l:assume}
        \If{$p_n\notin[n_*/n,n^*/n]$:}
          \State Remove $(z_1,\dots,z_n)$ from $\Sigma$
        \EndIf
      \EndFor
      \State Update $\Sigma:=\Sigma\times\mathbf{Z}$
    \EndFor
  \end{algorithmic}
\end{algorithm}

Algorithm~\ref{alg:BK} is a version of the Bayes--Kelly algorithm
that works for a conformity measure $A$
that is continuous under the true data-generating distribution;
in fact, it is sufficient to assume that,
for all $n$ and almost all $z_1,\dots,z_n$,
the $n$ conformity scores \eqref{eq:alpha} are all different.

Algorithm~\ref{alg:BK} maintains a set $\Sigma$ of sequences $(z_1,\dots,z_n)$
compatible with the p-values $p_1,\dots,p_{n-1}$ observed so far;
it is initialized to $\Sigma:=\mathbf{Z}$ for $n=1$ (line \ref{l:Sigma_0}).
The notation $Q_{\Sigma}$ in line~\ref{l:pushforward}
stands for the conditional distribution of the first $n$ observations
generated from $Q$ given that they belong to $\Sigma$.
For $n=1$ we have $f_1:=1$.
For this and other $n$, $f_n$ is the pushforward of $Q_{\Sigma}$
under the mapping of the type $\mathbf{Z}^n\to[0,1]$
defined in two steps:
first we apply \eqref{eq:alpha} to the input $(z_1,\dots,z_n)$
(ranging freely over $\mathbf{Z}^n$)
and then we apply \eqref{eq:p}.
Similarly,
the variables $z_1,\dots,z_n$ in the second \textbf{for} loop
(starting in line~\ref{l:second-for})
in Algorithm~\ref{alg:BK} are local ones;
they range freely over $\mathbf{Z}$
and do not interfere with the global $z_1,\dots,z_n$,
which are the first $n$ observations.
Finally, the $\alpha_1,\dots,\alpha_n$ inside that loop
are local variables that are completely separate
from the global variables with the same name.
In line~\ref{l:assume} we may assume $n^*=n_*+1$
(this can be violated only with probability zero).

It should be clear how to drop the assumption of continuity
of the conformity measure $A$ in Algorithm~\ref{alg:BK}:
a sequence $z_1,z_2,\dots$ leading to
\[
  k_n
  :=
  \left|
    \left\{
      i=1,\dots,n:
      \alpha_i=\alpha_n
    \right\}
  \right|
  >
  1
\]
should have its weight multiplied by $1/k_n$ at step $n$
after observing $p_n$ that is compatible with this sequence.

The optimality property of the Bayes--Kelly algorithm
is akin to Gr\"unwald et al.'s one,
namely to the first equality in \eqref{eq:optimality-1}.
The next theorem will spell it out.
In its statement,
$\SSS_A$ is the class of all CTMs based on a conformity measure $A$,
and $\PPP$ (representing the null hypothesis of randomness)
consists of all $P^{\infty}$,
$P$ ranging over the probability measures on $\mathbf{Z}$.

\begin{theorem}\label{thm:optimality}
  Fix a conformity measure $A$
  and an alternative hypothesis $Q$.
  At each step $N$,
  the Bayes--Kelly CTM $S^*$ attains the maximum of $\E(\log S_N)$
  among all CTMs $S$ based on $A$:
  \begin{equation}\label{eq:optimality-2}
    \sup_{S\in\SSS_A}
    \E_Q(\log S_N)
    =
    \E_Q(\log S^*_N)
    =
    D(A_*Q\|A_*P),
  \end{equation}
  where $A_*Q$ is the pushforward of $Q$
  under the mapping \eqref{eq:p}--\eqref{eq:alpha}
  of generating the p-values $p_1,\dots,p_N$,
  $P$ is an arbitrary element of $\PPP$,
  and $A_*P$ (which is the uniform distribution on $[0,1]^N$)
  is defined analogously to $A_*Q$.
\end{theorem}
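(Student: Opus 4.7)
The plan is to reduce Theorem~\ref{thm:optimality} to the classical Kelly log-optimality principle applied in the space of p-values. By the definition of a CTM every $S\in\SSS_A$ can be written as $S_N=B_N(p_1,\ldots,p_N)$ for some betting martingale $B$, so
\[
  \E_Q(\log S_N)=\E_{A_*Q}(\log B_N),
\]
where the right-hand expectation is taken against the pushforward $A_*Q$. By the validity property of conformal p-values, under any $P^\infty\in\PPP$ the p-values $p_1,p_2,\dots$ are IID uniform on $[0,1]$, so $A_*P$ coincides with the uniform measure $U^N$ on $[0,1]^N$. The task is therefore to maximise $\E_{A_*Q}(\log B_N)$ over all functions $B_N$ arising as the time-$N$ value of a test martingale on $([0,1]^\infty,\UUU,U)$.

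For the upper bound, the martingale property combined with $B_0=1$ gives $\E_U(B_N)=1$, so $B_N$ is (almost surely) the density of a probability measure with respect to $U^N$. Assuming $D(A_*Q\|A_*P)<\infty$ (the complementary case is disposed of by truncation), let $g$ denote the density of the $N$-marginal of $A_*Q$ with respect to $U^N$; then Gibbs' inequality yields
\[
  \E_{A_*Q}(\log B_N)\le\E_{A_*Q}(\log g)=D(A_*Q\|A_*P),
\]
with equality precisely when $B_N=g$ almost surely under $A_*Q$.

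For achievability, the Bayes--Kelly CTM $S^*$ is constructed so that $S^*_n/S^*_{n-1}=f_n(p_n)$ with $f_n$ the conditional density of $p_n$ given $p_1,\ldots,p_{n-1}$ under $A_*Q$, whose existence is guaranteed by Lemma~\ref{lem:conditional}. The chain rule for densities then identifies
\[
  S^*_N=\prod_{n=1}^N f_n(p_n)=g(p_1,\ldots,p_N),
\]
so $\E_Q(\log S^*_N)=\E_{A_*Q}(\log g)=D(A_*Q\|A_*P)$ and the upper bound is attained.

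The main obstacle is not the inequality itself (a one-line application of Gibbs) but the measurability and martingale bookkeeping. One has to check that the partial products $B^*_n:=\prod_{k\le n}f_k$ actually constitute a test martingale on $([0,1]^\infty,\UUU,U)$, which follows from Fubini and the fact that each $f_n$ integrates to $1$ in its last argument, and one has to handle the degenerate case where $A_*Q\not\ll U^N$ or $D(A_*Q\|A_*P)=\infty$ by a standard approximation. Lemma~\ref{lem:conditional}, which even provides a convenient piecewise constant version of $f_n$, does most of the work.
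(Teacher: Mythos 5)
Your argument is correct, but it takes a different route from the paper's. The paper's proof is a two-line reduction: it decomposes $\E_Q(\log S_N)\to\max$ into the sequence of one-step problems $\E_Q(\log(S_n/S_{n-1}))\to\max$ conditionally on $p_1,\dots,p_{n-1}$, and then invokes a one-step Kelly-optimality lemma from \cite{Vovk/etal:2022book} (Lemma~9.6 there) to conclude that the conditional density $f_n$ is the optimal relative increment at each step. You instead work globally at horizon $N$: you note that any $B_N$ arising from a test martingale satisfies $\E_U(B_N)=1$ and is therefore a probability density w.r.\ to $A_*P=U^N$, apply Gibbs' inequality once, and identify the maximizer with the joint density $g$ of $A_*Q$ and the Bayes--Kelly value $S^*_N$ with $\prod_n f_n=g$ via the chain rule. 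Your version is more self-contained and makes the identification with $D(A_*Q\|A_*P)$ in \eqref{eq:optimality-2} transparent, whereas the paper's sequential decomposition mirrors the online structure of the algorithm and yields the optimality of each conditional increment directly. One simplification you could make: the degenerate cases you set aside never actually occur, since by Lemma~\ref{lem:conditional} each conditional density $f_n$ is bounded by $n$, so $A_*Q$ is always absolutely continuous w.r.\ to $U^N$ with density at most $N!$ and hence $D(A_*Q\|A_*P)\le\log N!<\infty$; no truncation argument is needed. (A cosmetic point: your $f_n(p_n)$ should be read as the conditional density evaluated at $p_n$ given $p_1,\dots,p_{n-1}$, as in the paper, so that the chain-rule identity $\prod_n f_n=g$ is the one you intend.)
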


Notice the similarity between \eqref{eq:optimality-1} and \eqref{eq:optimality-2}
(and we can also add $\inf_{P\in\PPP}$
in front of $D(A_*Q\|A_*P)$ in \eqref{eq:optimality-2}).
The property of optimality given in Theorem~\ref{thm:optimality}
is typical of Bayesian algorithms.

\begin{proof}[Proof of Theorem~\ref{thm:optimality}]
  The optimization problem $\E_Q(\log S_N)\to\max$
  decomposes into the sequence of problems $\E_Q(\log(S_n/S_{n-1}))\to\max$
  for $n=1,\dots,N$ and for given $p_1,\dots,p_{n-1}$.
  It suffices to apply \cite[Lemma~9.6]{Vovk/etal:2022book}.
\end{proof}

Unlike Theorem~\ref{thm:Grunwald},
Theorem~\ref{thm:optimality} gives e-variables $S^*_n$
that are coherent in the sense of forming a test martingale.
The reason for this is that using conformal p-values
reduces the massive null hypothesis of randomness
to the simple hypothesis of uniformity.

\subsection{Shrunk alternative hypotheses}
\label{subsec:shrunk}

This subsection will be very speculative.

The Bayes--Kelley algorithm does not cover some interesting cases.
The conformity measures $A$ can be very complex,
and then the problem of designing an optimal betting martingale
becomes infeasible.
It is even possible for $A$ to have an element of intelligence in it
\cite[Sect.~8.6.3]{Vovk/etal:2022book}.
For example, $A$ can be based on a deep neural network
as an underlying algorithm \cite[Sect.~4.3.3]{Vovk/etal:2022book}.
In this case the Bayes--Kelly algorithm will be infeasible
even if we fix an alternative probability measure $Q$ on $\mathbf{Z}^{\infty}$.
(And we are unlikely to have such a probability measure $Q$ in the first place
if we are contemplating the use of such a conformity measure~$A$.)

In the case of such a ``quasi-intelligent'' conformity measure,
to find a suitable betting martingale $B$,
we might consider an alternative distribution on the p-values $p_1,p_2,\dots$
(whose distribution is uniform under the null hypothesis)
rather than an alternative distribution on the observations
or (which is not very different) on the underlying probability space.
Let us call a probability measure on $[0,1]^{\infty}$
interpreted as alternative distribution on the p-values
a \emph{shrunk alternative}
(in the spirit of ``filtration shrinkage'',
a popular topic of research in probability theory).
The betting martingale $B$ that is optimal in a natural sense
will then be the likelihood ratio of the shrunk alternative
to the null hypothesis of the uniform distribution on $[0,1]^{\infty}$.

How do we choose the shrunk alternative $Q$?
A natural approach is simply to try and make it
as large as possible in an attempt to approximate
the universal distribution
in the sense of the algorithmic theory of randomness
(see, e.g., \cite[Sect.~5]{Vovk/Vyugin:1994},
which calls it ``\emph{a priori} semidistribution'',
or \cite[Appendix A]{Vovk:arXiv2305},
which considers a non-sequential setting).
Such a betting martingale may be called ``quasi-universal'';
see \cite[Appendix B]{Vovk:arXiv2305} for a further discussion
in a non-sequential setting.
The quasi-universal betting martingale
is designed to approximate the universal supermartingale
(see, e.g., \cite[Sect.~3]{Vovk/Vyugin:1994}).

To summarise, an appealing informal choice is:
\begin{itemize}
\item
  a quasi-intelligent conformity measure;
\item
  a quasi-universal betting martingale.
\end{itemize}
To make testing methods based on these choices computationally efficient,
we might need modifications, e.g.,
in the direction of inductive conformal prediction
\cite[Sect.~4.2]{Vovk/etal:2022book}.

When designing a quasi-intelligent conformity measure,
we still need an objective function, perhaps informal.
In conformal prediction \cite[Part~I]{Vovk/etal:2022book}
a typical informal objective function
is the conformity measure's sensitivity to unusual observations
(and so detecting their ``nonconformity''),
which leads to smaller p-values for non-IID data.
Motivated by this informal objective function,
in the first edition of \cite{Vovk/etal:2022book}
(Sect.~7.1 of the first edition)
we only considered betting martingales $S$
such that, for each $n$, $S_n$ is a decreasing function
of the $n$th p-value $p_n$.
However, later \cite[Sect.~8.6.1]{Vovk/etal:2022book}
it turned out that allowing non-decreasing $S_n$
greatly improves the performance of conformal test martingales
on benchmark datasets.
Now it appears that our informal objective function
in designing a good conformity measure
should be not to minimize the p-value
(as it usually is in conformal prediction \cite[Sect.~3.1]{Vovk/etal:2022book})
but to come up with the most informative conformity scores $\alpha_i$
capturing the most relevant features of~$z_i$.

\section{Conclusion}
\label{sec:conclusion}

In this note we concentrated on the case of the hypothesis of randomness
as the null.
In fact conformal testing is applicable
to many other ``online compression models'';
see \cite[Part~IV]{Vovk/etal:2022book}.
Examples include partial exchangeability, Gaussian,
hypergraphical (useful for causal inference), and Markov models.

The most obvious open problem arising in connection with Theorem~\ref{thm:optimality}
is whether there is a way of extending the Bayes--Kelly algorithm
and \eqref{eq:optimality-2} to the case when the conformity measure $A$
also needs to be chosen optimally.
When only given the alternative hypothesis $Q$,
how do we choose the pair $(A,B)$,
where $B$ is the betting martingale,
optimally?
In this note we have only discussed how to choose $B$ optimally
given $Q$ and $A$.

Of course, it is not surprising that Theorem~\ref{thm:Grunwald}
of \cite{Grunwald/etal:arXiv1906} has limitations
(and the authors of \cite{Grunwald/etal:arXiv1906} discuss some),
but it is a great first step,
and it opens up interesting directions of further research.

\subsection*{Acknowledgements}

I am grateful to attendees of the RSS meeting
where \cite{Grunwald/etal:arXiv1906} was presented,
in particular Peter Gr\"unwald, Wouter Koolen, and Johannes Ruf,
for useful discussions.


\begin{thebibliography}{1}
\bibitem{Grunwald/etal:arXiv1906}
Peter Gr\"unwald, Rianne de~Heide, and Wouter~M. Koolen.
\newblock Safe testing.
\newblock Technical Report
  \href{https://arxiv.org/abs/1906.07801}{arXiv:1906.07801 [math.ST]},
  \href{https://arxiv.org/}{arXiv.org} e-Print archive, March 2023.
\newblock Journal version is to appear in the \emph{Journal of the Royal
  Statistical Society B} (with discussion).

\bibitem{Ramdas/etal:2022}
Aaditya Ramdas, Johannes Ruf, Martin Larsson, and Wouter~M. Koolen.
\newblock Testing exchangeability: {Fork}-convexity, supermartingales and
  e-processes.
\newblock {\em International Journal of Approximate Reasoning}, 141:83--109,
  2022.

\bibitem{Vovk:arXiv2305}
Vladimir Vovk.
\newblock Testing exchangeability in the batch mode with e-values and {Markov}
  alternatives.
\newblock Technical Report
  \href{https://arxiv.org/abs/2305.05284}{arXiv:2305.05284 [stat.ME]},
  \href{https://arxiv.org}{arXiv.org} e-Print archive, May 2023.

\bibitem{Vovk/etal:2022book}
Vladimir Vovk, Alex Gammerman, and Glenn Shafer.
\newblock {\em Algorithmic Learning in a Random World}.
\newblock Springer, Cham, second edition, 2022.

\bibitem{Vovk/Vyugin:1994}
Vladimir Vovk and Vladimir~V. V'yugin.
\newblock Prequential level of impossibility with some applications.
\newblock {\em Journal of the Royal Statistical Society B}, 56:115--123, 1994.

\bibitem{Williams:1991}
David Williams.
\newblock {\em Probability with Martingales}.
\newblock Cambridge University Press, Cambridge, 1991.
\end{thebibliography}
\end{document}